
\documentclass[11pt]{amsart}
\usepackage{a4wide,xcolor,eucal,enumerate,mathrsfs,graphics,subfig,booktabs,verbatim}
\usepackage{pdfsync}
\usepackage[hidelinks]{hyperref}
\usepackage{amsmath,amssymb,epsfig,amsthm,bm}
\usepackage[latin1]{inputenc}
\usepackage{dsfont}

\usepackage[english]{babel}
\usepackage{amsmath,amsfonts,amsthm,amssymb,dsfont,url}
\usepackage[foot]{amsaddr}
\usepackage[babel]{csquotes}
\usepackage{caption}
\captionsetup{tableposition=top,figureposition=bottom,font=small}
\captionsetup[figure]{labelfont={bf},labelsep = colon,labelformat = simple}
\captionsetup[table]{labelfont={bf},labelsep = colon,labelformat = simple}

\usepackage[numbers,sort&compress]{natbib}
\bibpunct[, ]{[}{]}{,}{n}{,}{,}
\makeatletter
\def\NAT@def@citea{\def\@citea{\NAT@separator}}
\makeatother

\makeatletter
\g@addto@macro{\endabstract}{\@setabstract}
\newcommand{\authorfootnotes}{\renewcommand\thefootnote{\@fnsymbol\c@footnote}}%
\makeatother

\usepackage{epstopdf}

\theoremstyle{plain}
\newtheorem{theorem}{Theorem}[section]

\theoremstyle{definition}

\theoremstyle{remark}

\newtheorem{condition}[theorem]{Condition}

\begin{document}
\begin{center}
  \LARGE 
Isotonic regression for metallic microstructure data: estimation and testing under order restrictions \par \bigskip

  \normalsize
  \authorfootnotes
  Martina~Vittorietti\footnote{Contact author: m.vittorietti@tudelft.nl}\textsuperscript{1}\textsuperscript{2}, Javier~Hidalgo\textsuperscript{3}, Wei~Li\textsuperscript{2}\textsuperscript{3}, Jilt~Sietsma\textsuperscript{3}, Geurt~Jongbloed\textsuperscript{1}\par \bigskip

  \textsuperscript{1}Department of Applied Mathematics, Delft University of Technology\par
  \textsuperscript{2}Materials Innovation Institute (M2i)\par 
\textsuperscript{3}Department of Materials Science and Engineering, Delft University of Technology\par \bigskip

\end{center}

\keywords{isotonic regression; alternating iterative method; likelihood ratio test; bootstrap; order restrictions; geometrically necessary dislocations}

\date{}

\dedicatory{}



\begin{abstract}
Investigating the main determinants of the mechanical performance of metals is not a simple task. Already known physical inspired qualitative relations between 2D microstructure characteristics and 3D mechanical properties can act as the starting point of the investigation. Isotonic regression allows to take into account ordering relations and leads to more efficient and accurate results when the underlying assumptions actually hold. The main goal in this paper is to test order relations in a model inspired by a materials science application. The statistical estimation procedure is described considering three different scenarios according to the knowledge of the variances: known variance ratio, completely unknown variances, variances under order restrictions. New likelihood ratio tests are developed in the last two cases. Both parametric and non-parametric bootstrap approaches are developed for finding the distribution of the test statistics under the null hypothesis. 
Finally an application on the relation between Geometrically Necessary Dislocations and number of observed microstructure precipitations is shown.
\end{abstract}

\section{Introduction}
Understanding the intrinsic nature of the mechanical properties of metals is usually not an easy task. In order to get insight into what gives desired mechanical performance to a metal, a deep and detailed analysis of the metal microstructure characteristics is needed. For instance, it is known in literature that dislocations, i.e. line defects in the crystalline arrangement of the atoms  \cite{hull2001}, play a fundamental role in the mechanical behavior of metal alloys. More specifically, the appearance of Geometrically Necessary Dislocations\footnote{ Dislocations are usually classified into redundant and non-redundant dislocations, respectively called Statistically Stored Dislocations (SSDs) and Geometrically Necessary Dislocations (GNDs). GNDs are dislocations with a cumulative effect and they allow the accommodation of lattice curvature due to non-homogeneous deformation. They control the work hardening individually by acting as obstacles to slip and collectively by creating a long-range back stress.} (GNDs) during plastic deformation of the material contributes to the hardening of the material.
Detecting GNDs from 2D microstructure images is often challenging.
One widely accepted way is to use the so called Kernel Average Misorientation (KAM) \cite{moussa2017}. The KAM, measured in Electron BackScatter Diffraction (EBSD), quantifies the average misorientation around a measurement point with respect to a defined set of a nearest or nearest plus second-nearest neighbor points \cite{calcagnotto2010}.
In \cite{littlewood2011,revilla2014,bird2015} studies on the relation between GNDs and microstructure properties such as grain size and carbides size are presented. The relation between GNDs and grain size has both theoretical and experimental confirmation and it can be related to the well-known macroscopic physical Hall-Petch relation \cite{hall1951,petch1953}. In fact, the Hall-Petch relation, in its original version, describes the negative dependence of yield stress (mechanical property) on grain size; loosely speaking the smaller grains are, the stronger the material is. More specifically in \cite{kadkhodapour2011} the authors give as an explanation of the relation between GNDs and grain size that as the grain size decreases the grain boundary layer in which GNDs typically accumulate, occupies a greater volume fraction of the material, therefore it is reasonable to think that the smaller are the grains, the more GNDs will be observed.\\
Still unclear is instead the relation between carbides and GNDs. 
In fact, since the 1940's several studies on how carbides affect the mechanical behaviour of metals have been conducted. In \cite{pippel1999} the authors state that the primary carbides and their distribution have a major influence on the wear resistance and the toughness of the material. 
However, carbides tend to precipitate along the grain boundaries, that as said before, are the locations in which GNDs typically accumulate. Until now, no direct physical relationship has been found between carbides and Geometrically Necessary Dislocations.Therefore, isolating carbides effect and assessing the conjecture on the positive relation between carbides and GNDs is a problem of interest.\\
In \cite{hildago2019} a descriptive statistical analysis with response variable  KAM, used as a proxy of GNDs and as explanatory variables the number of grains, the number of carbides and the position of carbides revealed an almost monotone trend of the response variable according to the increments of the explanatories.\\
Therefore, in order to take into account the already known direction of the physical relation, we want to propose an approach that incorporates this information and a procedure for testing the prementioned conjectures on a new dataset.\\
In this context, isotonic regression comes to aid. In fact, the idea at the basis of isotonic regression is taking order restrictions into account for improving the efficiency of the statistical analysis by reducing the error or the expected error of estimates and increasing the power of the testing procedures, provided that the hypothesized order restriction actually holds.
The first papers about isotonic regression appeared in the 1950's \cite{ayer1955,vaneden1957} and books \cite{barlow1972,robertson1988} are well known references for statistical inference under order restrictions.
Isotonic regression proves its power in different fields such as epidemiology in testing the effects of different treatments or in dose-finding \cite{salanti2001,Stylianou2002}, but also in genetics \cite{Luss2012}, business \cite{Keshvari2013}, biology \cite{Barragan2013}.
There are not many examples of isotonic regression use in Materials Science. Throughout this paper special attention is given to the peculiar data structure.
Nowadays, developments towards multivariate isotonic regression, isotonic regression in inverse and censoring problems \cite{Guntuboyina2018,Groeneboom2014}, Bayesian isotonic regression \cite{li2017} are ongoing.
But also in the most basic framework there is still something missing.\\
In this paper, starting off with the most basic case, univariate isotonic regression of means under normality assumptions with known variances, we guide the reader into estimation and testing order restriction assumptions, considering different conditions on the variances.\\
Three different scenarios are considered. In all three cases, we focus on maximum likelihood as estimation procedure and likelihood ratio test as test statistic for hypothesis testing.\\
The first case is the basic case in which `the variances' are known or unknown but their ratio is known. This instance is considered extensively in \cite{barlow1972,robertson1988} and results for estimation and testing order restrictions are already known.\\
The second scenario is from an applications point of view the most common scenario in which the variances are unknown. In \cite{shi1998}, the authors derive a two steps estimating procedure for means and variances and interesting results on existence and uniqueness of the maximum likelihood estimates are derived under special conditions. Another iterative method, proposed in \cite{shi2008}, is extended to the unknown variances case.
The derivation of the test statistic and of its distribution in this scenario is not trivial. 
In fact, the estimate of the mean under the null hypothesis is also affected by the non knowledge of the variances. We propose the likelihood ratio test statistic and two different bootstrap approaches, one parametric and one non-parametric, for obtaining the test statistic distribution.\\
The last model considers not only the means under order restrictions but also the variances. This case has not often been faced probably because it is not common to have prior knowledge on the order of both means and variances. 
As in the unknown variances scenario, a two steps procedure for estimating means and variances is derived in \cite{shi1994} and similar results on existence and uniqueness under specific conditions on the empirical variances are given. 
In \cite{shi2008} an improved algorithm called Alternating Iterative Method (AIM) and more general results about convergence are derived.
For testing in this case we derive the likelihood ratio test taking into account the order of variances also under the null hypothesis and apply a parametric and non-parametric bootstrap approach in line with the one derived in the unknown variance case to obtain approximate p-values.\\
The paper structure is the following.
In Section \ref{sec:estimation} we explain the estimation procedure of the isotonic means in the three different cases (Sections \ref{subsec:isocase1}, \ref{subsec:isocase2}, \ref{subsec:isocase3}).
In Section \ref{sec:testing} the focus is on the Likelihood Ratio Test. We present it in the three different cases (Sections \ref{subsec:testcase1}, \ref{subsec:testcase2}, \ref{subsec:testcase3}) and in Section \ref{sec:bootstrap} we propose both a parametric and non-parametric bootstrap approach for approximating the distribution of the test statistics under the null hypothesis.
Finally, in Section \ref{sec:application} we come back to the application and we illustrate step-by-step how to deal with a real problem and more precisely how to perform isotonic regression and test for monotonicity of KAM with respect to the number of carbides.
The paper ends with conclusions in Section \ref{sec:conclusions}.

\section{Estimating restricted means in the normal case}
\label{sec:estimation}
We first introduce isotonic regression and the notation used in the rest of the paper in a more general context. Normality is assumed throughout this section.

Let $y_{ij}$, $j=1,\dots,n_i$, $i=1,\dots,k$ be the $j$th observation of the response variable $\mathrm{Y}$ corresponding to the $i$th level of the explanatory variable $\mathrm{X}$.\\
We assume $\mathrm{Y}_{ij}$ to be independent random variables, normally distributed with means $\mu_i$ and variances $\sigma_i^2$, $i=1,\dots,k$, $j=1,2,\dots,n_i$.\\
The log-likelihood is then given by
\begin{equation}
l(\bm{\mu},\bm{\sigma^2})=\sum_{i=1}^k\Big\{-\frac{n_i}{2}\mathrm{ln}\sigma_i^2-\frac{1}{2\sigma^2_i}\sum_{j=1}^{n_i}(y_{ij}-\mu_i)^2\Big\}+\mathrm{c}
\label{eq:likelihood}
\end{equation}
where $\mathrm{c}$ is a constant which does not depend on the parameters $\bm{\mu}=(\mu_1,\dots\mu_k)'$ and $\bm{\sigma^2}=(\sigma^2_1,\dots,\sigma^2_k)'$.\\
Furthermore, we assume that $\bm{\mu}$ satisfies
\begin{equation}
\mu_1\le\mu_2\le\dots\le\mu_k.
\label{eq:assumptionmean}
\end{equation}
A $k$-dimensional vector $\bm{\mathrm{\mu}}$ is said to be isotonic if $t\le s$ implies $\mu_t\le\mu_s$.\\
Be $D$ the set of all the isotonic vectors in $\mathbb{R}^k$,
\begin{equation}
D=\{\bm{\mu}\in \mathbb{R}^k; \mu_1\le\mu_2\le\dots\le\mu_k\}
\label{eq:d}
\end{equation}
In this section we are interested in the maximum likelihood estimator of $(\bm{\mu},\bm{\sigma^2})$, where $\bm{\mu}$ is isotonic and $\sigma^2_i>0$.
Depending on the information on $\bm{\sigma^2}$, different MLEs have been derived.\\
In the following three subsections the three different cases are considered.

\subsection{Isotonic regression of means with known variance ratio}
\label{subsec:isocase1}
This first case constitutes the most basic case in which all variances are either known or unknown but they differ according to some known multiplicative constants $c_i$.
This means that the variance $\sigma^2_i$ of the response variable $\mathrm{Y}_i$ is given by:
\begin{equation*}
\sigma^2_i=c_i\sigma^2,\,\,\,\, 1\le i\le k.
\end{equation*}
This specific case is already covered in \cite{barlow1972,robertson1988}, but we hereafter report the main results.
The problem of maximizing log-likelihood (\ref{eq:likelihood}) in $\mu$ can be rewritten equivalently as solving:
\begin{equation}
\min_{\mu\in D}\sum_{i=1}^k(\bar y_i-\mu_i)^2w_i
\label{eq:minloglik}
\end{equation}
 where $\bar y_i=\frac{\sum_jy_{ij}}{n_i}$ and $w_i=\frac{n_i}{c_i}$.
Note that this objective function does not depend on $\sigma^2$. 
The solution, $\bm{\hat \mu}^{I}$, is called the isotonic regression of $\bm{\bar y}=(\bar y_1,\dots,\bar y_k)$ with weights $\bm{w}=(w_1,\dots,w_k)$ \cite{shi1998}.
For obtaining the solution to (\ref{eq:minloglik}), different algorithms have been proposed in the literature (\cite{barlow1972},\cite{robertson1988}).
In this paper, the  ``Pool-Adjacent Violators Algorithm'' (PAVA) is used.\\
More details about the algorithm are provided in Appendix \ref{subsec:ALGORITHM 2.1}.\\

\subsection{Isotonic regression of means with unknown variances}
\label{subsec:isocase2}
In this second case, no assumptions on the variances are made.
They are unknown and for obtaining the maximum likelihood estimate of $\bm{\mu}$, they need to be estimated as well.
In \cite{shi1998} the authors consider this case and interesting results on existence and uniqueness of the MLE are achieved. We hereby recall the main results.
The approach is to maximize the log-likelihood (\ref{eq:likelihood}), with $\bm{\mu}\in D$ and $\bm{\sigma^2}\in\mathbb{R}^k_+$.\\
For any fixed $\bm{\sigma^2}\in \mathbb{R}^k_+$ the maximizer $\bm{\hat\mu^I}$ of $l(\bm{\mu},\bm{\sigma^2})$ over $\bm{\mu}\in D$ is the isotonic regression of $\bm{\bar y}$ with weights $\bm{w}=(w_1,\dots w_k)'$ and $w_i=\frac{n_i}{\sigma^2_i}$.\\
On the other hand, for any fixed $\bm{\mu}\in D$, the maximizer $\bm{\sigma^2}$ of $l(\bm{\mu},\bm{\sigma^2})$ over $\bm{\sigma}\in \mathbb{R}^k_+$
is $\bm{\hat\sigma^2}(\bm{\mu})=(\hat\sigma^2_1(\mu_1),\dots\hat\sigma^2_k(\mu_k))'$, where $\hat\sigma^2_i(\mu_i)=\frac{\sum_j^{n_i}(y_{ij}-\mu_i)^2}{n_i}$.\\
Substituting $\hat\sigma^2(\mu)$ into (\ref{eq:likelihood}), we can express the profile log-likelihood of $\bm{\mu}$ as

\begin{equation}
l(\bm{\mu})=\sum_{i=1}^k-n_i\mathrm{ln}[\bar\sigma^2_i+(\bar y_i-\mu_i)^2]+c
\label{eq:proflikelihoodmusigmaun}
\end{equation}
where $\bar\sigma^2_i=\frac{\sum_j^{n_i}(y_{ij}-\bar y_i)^2}{n_i}$ is the sample variance of the $i$th normal population and $c$ a constant that does not depend on $\bm{\mu}$.
Note that $l(\bm{\mu})\to-\infty$ if $\mu_k\to\infty$ or $\mu_1\to-\infty$.
Hence, maximizing $l$ over $D$ is equivalent to maximizing $l$ over a compact subset of $D$ of type $D_a=\{\mu\in D:\mu_1\ge-a,\mu_k\le a\}$. As $l$ is continuous on $D_a$, a maximizer over $D$ exists.\\
As previously said, the authors in \cite{shi1998} discuss also uniqueness of the MLE of $(\bm{\mu},\bm{\sigma^2})$.
They state that $l$ is not a concave function in general and that for guaranteeing uniqueness the following condition suffices (see Theorem 2.3 \cite{shi1998}):
\begin{condition}
For $i=1,\dots,k$, $\bar\sigma^2_i>\max\{(\bar y_i-\min(\bm{\bar y}))^2,(\bar y_i-\max (\bm{\bar y}))^2\}$.
\label{conditiona}
\end{condition}
For finding a maximizer of (\ref{eq:proflikelihoodmusigmaun}), a two steps iterative algorithm based on PAVA has been proposed in \cite{shi1998}.
From an initial guess for $\bm{\mu}$, the associated maximizer in $\bm{\sigma^2}$ is computed and after that the maximizer in $\bm{\mu}$ based on this $\bm{\sigma^2}$ and so on. This iterative procedure stops when the maximum difference between the estimated means at step $l-1$ and at step $l$ is less than an arbitrary small threshold value, e.g.,
\begin{equation*}
\max_{1\le i\le k}|\mu_i^{I(l-1)}-\mu_i^{I(l)}|\le 10^{-m},
\end{equation*}
where $m$ is taken to be equal to $3$ in our case.
In \cite{shi2008} the authors propose a new algorithm called Alternating Iterative Method (AIM).
The procedure is based on the minimization of a semi-convex function.
In particular, restating the problem in terms of $(\bm{\mu},\bm{\nu})$, where $\bm{\nu}=(1/\sigma^2_1,\dots,1/\sigma^2_k)'$ and given $D_a$ is a convex subset of $\mathbb{R}^k$ and $V$ a convex subset of $\mathbb{R}^k_+$, $V=\{\nu\in\mathbb{R}^k_+:0\le 1/\max_i(\min_{\min(\bm{\bar y})\le\theta\le\max(\bm{\bar y})}s^2_i(\theta))\le \nu_i\le1/ \min_i(\min_{\min(\bm{\bar y})\le\theta\le\max(\bm{\bar y})}s^2_i(\theta))\}$, $L(\bm{\mu},\bm{\nu})$ is a semi-convex function because: i) $L(\bm{\mu},\bm{\nu})$ is defined on $D_a\times V$; ii) for any given $\bm{\mu}\in D_a$, $L(\bm{\mu},\cdot)$ is strictly convex on $V$ and, for any given $\bm{\nu}\in V$,  $L(\cdot,\bm{\nu})$ is strictly convex on $D_a$.
The algorithm originally proposed for the simultaneous order restrictions of means and variances can be easily extended to the unknown variance case.
The iteration method works in alternating the search of the minimum point, $\bm{\mu}^{(l)}$, of $L(\bm{\mu},\bm{\nu}(\bm{\mu}^{(l-1)}))$ on a compact subset $D_a$ and the search of the minimum point, $\bm{\nu^{(l)}}$, of $L(\bm{\mu}(\bm{\nu}^{(l-1)}),\bm{\nu})$ on $V$.
Proof of the convergence of the algorithm does not require additional conditions \cite{shi2008}.
The iterative procedure stops when the difference between the likelihoods at step $l-1$ and at step $l$ is less than an arbitrary small threshold value:
\begin{equation}
|L(\bm{\mu^{(l-1)}},\bm{\nu^{(l-1)}})-L(;\bm{\mu^{(l)}},\bm{\nu^{(l)}})|\le 10^{-m}
\label{eq:stopcriteria}
\end{equation}
A more detailed version of both algorithms is reported in Appendix \ref{subsec:ALGORITHM 2.2}.

\subsection{Isotonic regression of means and variances simultaneously}
\label{subsec:isocase3}
We now assume that both mean and variances are restricted by simple orderings. Therefore, in addition to assumption (\ref{eq:assumptionmean}), we assume also:
\begin{equation}
\sigma_1^2\ge\sigma_2^2\ge\dots\ge\sigma_k^2>0
\label{eq:assumptionvar1}
\end{equation}
The reason for taking decreasing order is relates to our application considered in Section \ref{sec:application}; increasing variances can be dealt with analogously.
In \cite{shi1994}, maximum likelihood estimation under simultaneous order restrictions on mean and variances from a Normal population is studied.
Some of the most important results are hereby recalled.
The approach is to maximize the log-likelihood (\ref{eq:likelihood}) with $\bm{\mu}\in D$ and $\bm{\sigma^2}\in \bar G$, where $\bar G$ is the closure of
\begin{equation}
G=\{\bm{\sigma^2}\in\mathbb{R}^{k}_+:\sigma_1^2\ge\sigma_2^2\ge\dots\ge\sigma_k^2>0\}.
\end{equation}
This means that the maximizer will have positive $\sigma^2$-values if there is variation within the groups.
Then, for any fixed $\bm{\sigma^2}\in G$, the maximizer $\bm{\mu^I}$ of $l(\bm{\mu},\bm{\sigma^2})$ over $\bm{\mu}\in D$ is the isotonic regression of $\bm{\bar y}$ with weights $\bm{w}=(w_1,\dots w_k)'$ and $w_i=\frac{n_i}{\sigma^2_i}$.\\
Furthermore, for any $\bm{\mu} \in D$, the maximizer $\bm{\hat\sigma^{2I}}(\bm{\mu})$ of $l(\bm{\mu}, \bm{\sigma^2})$ is the so called antitonic regression (isotonic regression with reversed order \cite{Groeneboom2014}) of $\bm{s^2}=(s_1^2,\dots s_k^2)'$, $s_i^2=\frac{\sum_{j=1}^{n_i}(y_{ij}-\mu_i)^2}{n_i}$, with weights $\bm{N}=(n_1,\dots,n_k)'$.
Existence is guaranteed noticing that \\$\bm{\sigma^2}\in [\min_i(\min_{\min_i(\bm{\bar y})\le\theta\le\max_i(\bm{\bar y})}s^2_i(\theta)),\max_i(\min_{\min_i(\bm{\bar y})\le\theta\le\max_i(\bm{\bar y})}s^2_i(\theta))]$, $s^2_i(\theta)=\sum_{j=1}^{n_i}(y_{ij}-\theta)^2/n_i$ (see Theorem 2.1 \cite{shi1994}).\\
Uniqueness is proved under the following condition (see Theorem 2.2 \cite{shi1994})
\begin{condition}
For $i=1,\dots k$ the sample variance $\bar\sigma_i^2$ satisfies $\bar\sigma^2_i>2(b-a)$ where $b$ and $a$ are the maximal and the minimal means respectively.
\label{conditionb}
\end{condition}
As in the unknown variances case, a two steps iterative algorithm is proposed for finding the solution for both means and variances under order restrictions.
The proof of the convergence of the algorithm is given under Condition \ref{conditionb}. \\
Later, in \cite{shi2008}, as mentioned in the previous section, the authors show that restating the problem in terms of $(\bm{\mu},\bm{\nu})$, where $\bm{\nu}=(1/\sigma^2_1,\dots,1/\sigma^2_k)'$ Condition \ref{conditionb} is not needed for proving that the algorithm converges. In fact, also in this case the proposed AIM algorithm can be employed.
Since $L(\bm{\mu},\bm{\nu})$ has continuous second-order partial derivatives and the Hessian matrix with respect to $\mu$ $H(\bm{\mu},\bm{\nu})=\mathrm{diag}(n_1\nu_{1},\dots,n_k\nu_{k})$ is a positive definite diagonal matrix for any fixed $\bm{\nu}=(\nu_{1},\dots,\nu_{k})'\in V_0$, $V_0=\{\nu\in\mathbb{R}^k:0\le 1/\max_i(\min_{\min_i(\bm{\bar y})\le\theta\le\max_i(\bm{\bar y})}s^2_i(\theta))\le \nu_1\le\dots\le\nu_k\le1/ \min_i(\min_{\min_i(\bm{\bar y})\le\theta\le\max_i(\bm{\bar y})}s^2_i(\theta))\}$ then  by Theorem 4 in \cite{shi2008} the iterative sequence of solutions to $L(\bm{\mu},\bm{\nu})$, $\{(\bm{\mu^{(n)}},\bm{\nu^{(n)}})\}$ converges to the MLE  solution and consequently the sequence $\{(\bm{\mu^{(n)}},\bm{\sigma^{2(n)}})\}$ as well.\\ 
As in the previous case, the alternating iterative procedure is stopped when the maximum difference between the likelihoods at step $l-1$ and at step $l$ is less than an arbitrary small threshold value (see (\ref{eq:stopcriteria})).\\
A pseudo code of the algorithms can be found in Appendix \ref{subsec:ALGORITHM 2.3}.

\section{Likelihood Ratio Test: constant $\mu$ against monotonicity}
\label{sec:testing}
We are interested in testing hypotheses of monotonicity in $\mu$ under the various assumptions on the variances discussed in Section \ref{sec:estimation}. 
There exists extensive literature on testing hypotheses on means.
In most cases, a standard testing procedure entails testing the hypothesis of equality of means against the hypothesis that they are different. In this paper, we consider the same null hypothesis but the alternative is different: monotonicity of the means.
As in the previous section, we consider three different testing frameworks according to the different assumptions on the variances.
In all three different scenarios the test statistic of interest is the Likelihood Ratio Test (LRT), an intuitive and powerful tool in hypothesis testing.
In both \cite{barlow1972} and \cite{robertson1988} an entire chapter is dedicated to LRT developments and its use for testing order restrictions hypothesis under the normality assumption and known variance ratio.
Using the same notation used in Section \ref{sec:estimation}, we wish to test
\begin{equation*}
H_0:\,\,\,\,\,\mu_1=\mu_2=\dots=\mu_k
\end{equation*}
against monotonicity of means 
\begin{equation}
H_1:\,\,\,\,\,\mu_1\le\mu_2\le\dots\le\mu_k.
\label{H1}
\end{equation}
The likelihood ratio test for $H_0$ against $H_1$ can be defined as:
\begin{equation}
\Lambda=\frac{\max_{(\bm{\mu}\in H_0;\bm{\sigma^2})}L(\bm{y_1},\bm{y_2},\dots,\bm{y_k};\bm{\mu},\bm{\sigma^2})}{\max_{(\bm{\mu}\in H_1;\bm{\sigma^2})}L(\bm{y_1},\bm{y_2},\dots,\bm{y_k};\bm{\mu},\bm{\sigma^2})}
\label{eq:genericlrt}
\end{equation}
where $\bm{y_i}=(y_{i1},\dots,y_{in_i})'$, $\bm{\mu}=(\mu_1,\dots\mu_k)'$ 
and $\bm{\sigma^2}=(\sigma^2_1,\dots\sigma^2_k)'$ . It rejects the null hypothesis for small values of $\Lambda$ or alternatively for large values of $-2\log\Lambda$. The convenience in using this other form lies on the analogy with the $\chi^2$ statistic used to test against the alternative hypothesis $\bar H_0$, that not all $\mu_i$'s, $i=1,\dots,k$, are the same.\\
In the following subsections more explicit expressions for $\Lambda$ are given depending on the specific assumptions on means and variances.

\subsection{LRT with known variance ratio}
\label{subsec:testcase1}
As in Section \ref{subsec:isocase1} let $y_{ij}$ $j=1,2,\dots n_i$, $i=1,2,\dots k$  be independent observations, normally distributed with unknown mean $\mu_i$ and variances $\sigma^2_i=c_i\sigma^2$ with $c_i$ known and $\sigma^2$ unknown.
Under $H_0$, the maximum likelihood estimate of $\mu_1=\mu_2=\dots=\mu_k$ is given by:
\begin{equation}
\hat \mu_{H_0}=\frac{\sum_{i=1}^kw_i\bar y_i}{\sum_{i=1}^kw_i}
\end{equation}
with $w_i=\frac{n_i}{c_i}$.
Under $H_1$ the MLE of $\bm{ \mu}$ is $\bm {\hat \mu^I_{H_1}}$, the isotonic regression of $\bm{\bar y}$, with weights $\bm{w}=(w_1,\dots,w_k)'$, with respect to the simple order defined in (\ref{H1}).\\
The likelihood ratio test for $H_0$ against $H_1$, if the variances are known and $c_i=1$ boils down to rejecting $H_0$ for large values of

\begin{equation}
-2\log\Lambda=\frac{1}{\sigma^2}\Big[\sum_{i=1}^k\sum_{j=1}^{n_i}(y_{ij}-\hat\mu_{H_0})^2-\sum_{i=1}^k\sum_{j=1}^{n_i}(y_{ij}-\hat\mu_{iH_1}^I)^2\Big]
\label{eq:-2logl}
\end{equation}
It is easy to check that the test is equivalent to rejecting $H_0$ for large values of:
\begin{equation}
\bar \chi^2=\frac{\sum_{i=1}^k\bar\chi^2_i}{\sigma^2}
\end{equation}
where $\bar\chi^2_i=n_i(\hat\mu_{iH_1}^I-\hat\mu_{H_0})^2$ and $\sigma^2$ is the (known) common value of the variance.

Now, let us consider the more general case, $\sigma^2_i=c_i\sigma^2$ with $c_1,c_2,\dots c_k$ known and $\sigma^2$ unknown.
The estimator of $\sigma^2$ under the null hypothesis is
\begin{equation}
\hat\sigma_{H_0}^2=\frac{\sum_{i=1}^kc_i^{-1}\sum_{j=1}^{n_i}(y_{ij}-\hat\mu_{H_0})^2}{N}
\end{equation}
and under $H_1$
\begin{equation}
\hat\sigma_{H_1}^2=\frac{\sum_{i=1}^kc_i^{-1}\sum_{j=1}^{n_i}(y_{ij}-\hat\mu_{iH_1}^I)^2}{N}
\end{equation}
The likelihood ratio test rejects $H_0$ for small values of $\Lambda=\big(\frac{\hat\sigma_{H_1}^2}{\hat\sigma^2_{H_0}}\big)^{N/2}$ or equivalently, taking $\bar E^2=1-\Lambda^{2/N}$, for large values of 
\begin{equation}
\bar E^2=\frac{\sum_{i=1}^kc_i^{-1}\bar\chi_i^2}{\sum_{i=1}^kc_i^{-1}\sum_{j=1}^{n_i}(y_{ij}-\hat\mu_{H_0})^2}
\end{equation}
An extension to the multivariate case with covariance matrix $\Sigma$ unknown but common can be found in \cite{perlman1969,sasabuchi2007}.

\subsection{LRT with unknown variances}
\label{subsec:testcase2}

In this second case, no assumptions on the variances are made. They are unknown and possibly unequal. 
Using the notation of Section \ref{subsec:isocase2} let $y_{ij}$, $j=1,2,\dots,n_i$, $i=1,2,\dots,k$ be independent observations from a univariate Normal distribution with unknown mean vector $\mu_i$ and completely unknown variances $\sigma^2_i>0$.
Let $\bm{\hat\mu^I}$ be the solution of the isotonic regression of $\bm{\bar y}$ with weights $\bm{w}=(w_1,\dots,w_k)'$, $w_i=\frac{n_i}{\sigma^2_i}$ found used Algorithm (2.2) in Appendix \ref{sec:appendix}.\\
The first example of testing when all the variances are unknown can be found in \cite{bazyari2017} and the univariate version of the test proposed by the author is:
\begin{equation}
\sum_{i=1}^k\frac{(\hat\mu_i^I-\bar y)^2n_i}{s^2_i}
\end{equation}
where $\bar y=\frac{\sum_{i=1}^k n_i\bar y_i}{\sum_{i=1}^kn_i}$ and $s^2_i=\frac{\sum_{i=1}^k\sum_{j=1}^{n_i}(y_{ij}-\bar y_i)^2}{n_i-1}$.
This test is clearly inspired by the LRT but it is not. \\
Let us consider first the maximum likelihood solution $(\hat\mu_{H_0},\bm{\hat\sigma^2}_{H_0})$, $\bm{\hat\sigma^2_{H_0}}=(\hat\sigma^2_{1H_0},\dots,\hat\sigma^2_{kH_0})'$ under the null hypothesis.
The log-likelihood under the null hypothesis is
\begin{equation}
l(\mu,\bm{\sigma^2})=\sum_{i=1}^k\Big\{-\frac{n_i}{2}\mathrm{ln}\sigma_{i}^2-\frac{1}{2\sigma^2_{i}}\sum_{j=1}^{n_i}(y_{ij}-\mu)^2\Big\}+\mathrm{c}.
\label{eq:likelihoodh0uv}
\end{equation}
Differentiating this loglikelihood with respect to $\mu$ and $\sigma^2_{i}$, the  following $k+1$ score equations in $k+1$ unknowns emerge:
\begin{equation}
\begin{cases}
 \mu_{H_0}=\frac{\sum_{i=1}^k n_i\sigma^{-2}_{iH_0} \bar y_i}{\sum_{i=1}^k n_i\sigma^{-2}_{iH_0}} & \\[15pt]

 \sigma^2_{iH_0}=\sum_{j=1}^{n_i}n_i^{-1}(y_{ij}-\mu_{H_0})^2 & 1\le i\le k
\end{cases}
\label{eq:mlesolcase2}
\end{equation}
Substituting $\bm{\sigma^2_{H_0}(\mu)}$ in (\ref{eq:likelihoodh0uv}), the profile likelihood of $\mu$ is:
\begin{equation}
l(\mu)=-\sum_{i=1}^k\frac{n_i}{2}\mathrm{ln}\big(\sum_{j=1}^{n_i}n_i^{-1}(y_{ij}-\mu)^2\big)+c.
\label{eq:profmuuv}
\end{equation}

\begin{theorem}
A maximizer of (\ref{eq:profmuuv}) over $\mathbb{R}^d$ exists and it is contained in $[\min_i\bar y_i,\max_i\bar y_i]$. Moreover, if  $[\min_i\bar y_i,\max_i\bar y_i]\in[\max_{1\le i\le k }(\bar y_i -\bar \sigma_i),\min_{1\le i\le k}(\bar y_i +\bar \sigma_i)]$ then the maximizer is unique.
\label{thm:exisuniquv}
\end{theorem}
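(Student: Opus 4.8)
The plan is to work throughout with the closed form of the inner sum. Expanding the squared deviations about the group mean gives $\sum_{j=1}^{n_i} n_i^{-1}(y_{ij}-\mu)^2 = \bar\sigma_i^2 + (\bar y_i - \mu)^2$, where $\bar\sigma_i^2$ is the $i$th sample variance, so that the profile log-likelihood (\ref{eq:profmuuv}) can be rewritten as $l(\mu) = -\sum_{i=1}^k \frac{n_i}{2}\ln[\bar\sigma_i^2 + (\mu-\bar y_i)^2] + c$. Assuming within-group variation, i.e.\ $\bar\sigma_i^2 > 0$ for every $i$, each summand is a finite, continuous function of $\mu$, and since $(\mu - \bar y_i)^2 \to \infty$ as $|\mu|\to\infty$ we have $l(\mu)\to-\infty$ at both ends. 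A continuous function that tends to $-\infty$ at $\pm\infty$ attains its maximum on $\mathbb{R}$ (here $\mu$ is a scalar under $H_0$, so the ``$d$'' in the statement is $1$), which settles existence.

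For the location of any maximizer, I would argue by monotonicity outside the data range. For $\mu < \min_i \bar y_i$ we have $\mu < \bar y_i$ for every $i$, so each $(\mu - \bar y_i)^2$ is strictly decreasing in $\mu$; hence every summand $\ln[\bar\sigma_i^2 + (\mu - \bar y_i)^2]$ strictly decreases and $l$ strictly increases as $\mu$ rises toward $\min_i \bar y_i$. Symmetrically, $l$ is strictly decreasing on $(\max_i \bar y_i, \infty)$. Therefore no point outside $[\min_i \bar y_i, \max_i \bar y_i]$ can be a maximizer, and every maximizer lies in this interval.

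Uniqueness is where the stated condition enters, and this is the part requiring the most care. Differentiating twice gives $l''(\mu) = -\sum_{i=1}^k n_i \frac{\bar\sigma_i^2 - (\mu - \bar y_i)^2}{[\bar\sigma_i^2 + (\mu - \bar y_i)^2]^2}$. The hypothesis, read as the containment $[\min_i \bar y_i, \max_i \bar y_i] \subseteq [\max_i(\bar y_i - \bar\sigma_i), \min_i(\bar y_i + \bar\sigma_i)]$, is precisely the statement that $|\mu - \bar y_i| \le \bar\sigma_i$, equivalently $(\mu - \bar y_i)^2 \le \bar\sigma_i^2$, holds for every $i$ and every $\mu$ in the interval where the maximizer is known to live. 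Under this condition each numerator $\bar\sigma_i^2 - (\mu-\bar y_i)^2$ is nonnegative, so $l''(\mu) \le 0$ throughout $[\min_i \bar y_i, \max_i \bar y_i]$ and $l$ is concave there.

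To upgrade concavity to a \emph{unique} maximizer, I would note that $l''$ can vanish on this interval only at those $\mu$ for which $(\mu - \bar y_i)^2 = \bar\sigma_i^2$ simultaneously for all $i$; for each fixed $i$ this equation has at most two solutions, so the zero set of $l''$ contains no open subinterval. A twice-differentiable function with $l'' \le 0$ that is affine on no subinterval is strictly concave, and a strictly concave function has at most one maximizer on a convex set. Combining this with the existence and location already established yields a unique maximizer in $[\min_i \bar y_i, \max_i \bar y_i]$. The main obstacle is thus the uniqueness step: computing $l''$ cleanly, recognizing the hypothesis as the elementwise bound $(\mu - \bar y_i)^2 \le \bar\sigma_i^2$, and passing from the weak inequality $l'' \le 0$ to genuine strict concavity via the fact that the degeneracy set is finite.
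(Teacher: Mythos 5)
Your proof is correct and follows essentially the same route as the paper: the same decomposition $\sum_j n_i^{-1}(y_{ij}-\mu)^2=\bar\sigma_i^2+(\bar y_i-\mu)^2$, localization of all maximizers to $[\min_i\bar y_i,\max_i\bar y_i]$ via monotonicity of $l$ outside the range of the means, and uniqueness from concavity of $l$ on that interval under the stated containment condition. The only cosmetic differences are that you get existence via coercivity ($l\to-\infty$ as $|\mu|\to\infty$) rather than continuity on a compact interval, and you derive strict concavity from the aggregate second derivative together with the finite-degeneracy-set argument, where the paper instead uses strict concavity of each individual summand on $[\bar y_i-\bar\sigma_i,\bar y_i+\bar\sigma_i]$ and sums.
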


\begin{proof}
Maximizing profile likelihood of $\mu$ (\ref{eq:profmuuv}) boils down to maximize the sum of functions

\begin{equation}
-\frac{n_i}{2}\mathrm{ln}(n_i(\bar\sigma^2_i+(\bar y_i-\mu)^2)), \,\,\,\, i=1,\dots,k.
\label{eq:prototype}
\end{equation}
Functions of type (\ref{eq:prototype}) are unimodal with mode at $\bar y_i$ and strictly concave on $[\bar y_i-\bar\sigma_i; \bar y_i+\bar\sigma_i]$. As the sum of unimodal functions is decreasing to the right of the rightmost mode (since all terms are decreasing) and from $-\infty$ to the leftmost mode, the sum is increasing (as all of the functions are increasing on that set).
Therefore, any maximizer of $l$, if it exists, belongs to the interval $[\min_i \bar y_i,\max_i\bar y_i]$. As $l$ is continuous on $[\min_i \bar y_i,\max_i\bar y_i]$, existence of a maximizer is guaranteed.\\
Then if we consider the (possibly empty) interval where all the functions in (\ref{eq:prototype}) are strictly concave, on that interval the sum is also strictly concave.
As for each $i$ the function (\ref{eq:prototype}) is strictly concave on $I_i=[\bar y_i-\bar\sigma_i;\bar y_i-\bar\sigma_i]$, (\ref{eq:profmuuv}) is strictly concave on $\bigcap_{i=1}^k I_i$. If $[\min_i\bar y_i,\max_i\bar y_i]$ is contained in this intersection, $l$ is strictly concave on $[\min_i\bar y_i,\max_i\bar y_i]$. Hence $l$ has a unique maximizer on $\mathbb{R}^d$.\\
\end{proof}
\textit{Remark}: in a setting with real data, it is easy to check whether $[\min_i\bar y_i,\max_i\bar y_i]\in[\max_{1\le i\le k }(\bar y_i -\bar \sigma_i),\min_{1\le i\le k}(\bar y_i +\bar \sigma_i)]$ and hence to determine whether the maximum is unique.\\
However, as seen from (\ref{eq:mlesolcase2}) the MLE estimate $(\mu,\bm{\sigma^2})$ has no closed form expression. 
Therefore, in \cite{gokpinar2012} and \cite{mutlu2017} two different methods for finding the optimal solution are proposed.
The first is an iterative procedure based on the Newton-Raphson method.
A reasonable initial value for $\hat \mu^{(0)}_{H_0}$ is the so called Graybill-Deal estimator \cite{graybill1959} $\hat \mu_{(GD)}=\frac{\sum_{i=1}^k(n_i\bar y_i)/\bar s^2_i}{\sum_{i=1}^kn_i/ \bar s^2_i}$ with $\bar s^2_i=\frac{\sum_{j=1}^{n_i}(y_{ij}-\bar y_i)^2}{n_i-1}$. The convergence speed of the algorithm strongly depends on the initial values. 
The second method is based on the profile likelihood approach.
The authors in \cite{mutlu2017} propose the bisection method for finding the zero of the profile likelihood with respect to $\mu_{H_0}$.
Under $H_1$ we use as estimates of $(\bm{\mu_{H_1}},\bm{\sigma^2_{iH1}})$, $(\bm{\hat\mu^I},\bm{\hat\sigma^2})$ found using the iterative procedure described in Section \ref{subsec:isocase2}.\\
The likelihood ratio test when the variances are completely unknown can be expressed as:
\begin{equation*}
\tilde \Lambda=\prod_{i=1}^k\Big(\frac{\hat\sigma^2_{iH_0}}{\hat\sigma^2_{iH_1}}\Big)^{-\frac{n_i}{2}}
\end{equation*}
Therefore, as in the previous case, the test rejects for small values of $\tilde \Lambda$ or equivalently for large values of $-2\log\tilde\Lambda$.

\subsection{LRT with ordered variances}
\label{subsec:testcase3}
Using the notation of Section \ref{subsec:isocase3} let $y_{ij}$, $j=1,2,\dots,n_i$, $i=1,2,\dots,k$ be independent observations from Normal distributions with mean vector $\mu_i$ and variances $\sigma^2_i$. 
As in the previous case, the first step is the estimation of $(\mu,\bm{\sigma^2})$ under the null hypothesis.
In this case we need to maximize (\ref{eq:likelihoodh0uv})
under the restriction

\begin{equation}
\sigma_1^2\ge\sigma_2^2\ge\dots\ge\sigma_k^2>0.
\label{eq:sigmaconstraints}
\end{equation}

\begin{theorem}
Suppose that for $1\le i\le k$, $\bar{\sigma}_i^2>0$. Then there exists a maximizer of (\ref{eq:likelihoodh0uv}) under constraints (\ref{eq:sigmaconstraints}).
\label{thm:exisuniqov}
\end{theorem}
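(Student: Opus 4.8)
The plan is to show that the supremum of the log-likelihood (\ref{eq:likelihoodh0uv}) over the feasible region $F=\{(\mu,\bm{\sigma^2}):\mu\in\mathbb{R},\ \bm{\sigma^2}\in\bar G\}$ is attained, by confining the optimisation to a compact subset on which (\ref{eq:likelihoodh0uv}) is continuous and then invoking the Weierstrass extreme value theorem. Two reductions are needed: one bounding the admissible values of $\mu$, and one bounding the admissible variances away from both $0$ and $\infty$. Throughout I would use that, under $H_0$, the quantity $s_i^2$ evaluated at the common mean $\mu$ equals $s_i^2(\mu)=\frac{1}{n_i}\sum_{j=1}^{n_i}(y_{ij}-\mu)^2=\bar\sigma_i^2+(\bar y_i-\mu)^2$.

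First I would argue that the search in $\mu$ may be restricted to the compact interval $[\min_i\bar y_i,\max_i\bar y_i]$. Fix any feasible $\bm{\sigma^2}$ and any $\mu>\max_i\bar y_i$; replacing $\mu$ by $\mu'=\max_i\bar y_i$ keeps $\bm{\sigma^2}$ feasible and, since $(\bar y_i-\mu')^2\le(\bar y_i-\mu)^2$ for every $i$, does not decrease (\ref{eq:likelihoodh0uv}), because each term $-\frac{1}{2\sigma_i^2}\sum_j(y_{ij}-\mu)^2$ can only increase while the terms $-\frac{n_i}{2}\ln\sigma_i^2$ are unchanged. The symmetric argument handles $\mu<\min_i\bar y_i$, so $\sup_F l=\sup_{F\cap\{\mu\in[\min_i\bar y_i,\max_i\bar y_i]\}}l$.

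Next, for $\mu$ in this interval I would bound the optimal variances. As recalled in Section \ref{subsec:isocase3} (Theorem 2.1 in \cite{shi1994}), for fixed $\mu$ the maximiser of (\ref{eq:likelihoodh0uv}) in $\bm{\sigma^2}$ is the antitonic regression of $(s_1^2(\mu),\dots,s_k^2(\mu))$ with weights $\bm{N}$, and each of its components, being a weighted average of a contiguous block of the $s_i^2(\mu)$, lies in $[\min_i s_i^2(\mu),\max_i s_i^2(\mu)]$. Since $s_i^2(\mu)=\bar\sigma_i^2+(\bar y_i-\mu)^2$, for $\mu\in[\min_i\bar y_i,\max_i\bar y_i]$ one gets the uniform bounds $m\le s_i^2(\mu)\le M$ with $m=\min_i\bar\sigma_i^2$ and $M=\max_i\bar\sigma_i^2+(\max_i\bar y_i-\min_i\bar y_i)^2$. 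This is the step I expect to be the crux, and it is precisely where the hypothesis $\bar\sigma_i^2>0$ is indispensable: it guarantees $m>0$, so the optimal variances stay bounded away from the singular boundary $\sigma_i^2=0$, where (\ref{eq:likelihoodh0uv}) tends to $-\infty$; the finite upper bound $M$ keeps them away from $\infty$, where the terms $-\frac{n_i}{2}\ln\sigma_i^2$ drive $l$ to $-\infty$.

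Combining the two reductions, the supremum of $l$ over $F$ equals its supremum over
\begin{equation*}
K=\big\{(\mu,\bm{\sigma^2}):\mu\in[\min_i\bar y_i,\max_i\bar y_i],\ \bm{\sigma^2}\in\bar G,\ m\le\sigma_i^2\le M\ \text{for all }i\big\}.
\end{equation*}
The set $K$ is closed and bounded, hence compact, and $l$ is continuous on $K$ since every $\sigma_i^2\ge m>0$ keeps both the logarithm and the reciprocal in (\ref{eq:likelihoodh0uv}) finite. By the Weierstrass theorem $l$ attains a maximum on $K$, and by the two reductions this point is a global maximiser over $F$, establishing existence. Note that uniqueness is not claimed, so no concavity analysis is required beyond these boundary estimates.
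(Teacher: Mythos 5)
Your proof is correct and follows essentially the same route as the paper's: confine $\mu$ to $[\min_i\bar y_i,\max_i\bar y_i]$, use the antitonic-regression characterization of the optimal variances to trap them in $[\min_i\bar\sigma_i^2,\ \max_i\bar\sigma_i^2+(\max_i\bar y_i-\min_i\bar y_i)^2]^k$ (where $\bar\sigma_i^2>0$ keeps this box away from the singular boundary), and invoke continuity on a compact set. The only cosmetic difference is that you justify the restriction on $\mu$ by a direct monotone-comparison argument, whereas the paper derives it from the explicit weighted-average form of $\hat\mu(\bm{\sigma^2})$; both yield the same reduction.
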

\begin{proof}
First consider the situation for fixed $\bm{\sigma^2}$ with $\sigma_i^2>0$ for all $i$. Differentiating (\ref{eq:likelihoodh0uv}) with respect to $\mu$ yields the equation
$$
\sum_{i=1}^k \frac{n_i(\bar{y_i}-\mu)}{\sigma^2_i}
$$
This shows, that for this $\bm{\sigma^2}$, the (unique) maximizer  of (\ref{eq:likelihoodh0uv}) in $\mu$ is given by the following weighted sum of level-means,
$$
\hat{\mu}(\bm{\sigma^2})=\frac{\sum_{i=1}^kn_i\sigma^{-2}_{i} \bar y_i}{\sum_{i=1}^kn_i\sigma^{-2}_{i}}
$$
Consequently, $\min_i\bar{y}_i\le\hat{\mu}(\bm{\sigma^2})\le\max_i\bar{y}_i$, bounding the set of possible maximizers of (\ref{eq:likelihoodh0uv}) in $\mu$ irrespective of the precise value of $\bm{\sigma^2}$.\\
Now, given any $\mu\in\mathbb{R}$, the corresponding optimal $\bm{\sigma^2}$ is the solution to the  antitonic regression problem $\mathrm{antireg}(\bm{\bar\sigma^2_{H_0}},\bm{N})$ where $\bm{\bar\sigma^2}=(\bar\sigma_{1}^2,\dots \bar\sigma_{k}^2)'$, $\bar\sigma_{i}^2=\frac{\sum_{j=1}^{n_i}(y_{ij}-\mu)^2}{n_i}$, $\bm{N}=(n_1,\dots,n_k)'$ (see \cite{robertson1988} Example 1.5.5).
The vector to be projected has elements $\bar\sigma_i^2+(\mu-\bar y_i)^2$. This means, that if $\mu$ is restricted to $[\min_i\bar{y}_i,\max_i\bar{y}_i]$, the coordinates to be projected all belong to the interval
 $[\min_i\bar\sigma_i^2,\max \bar\sigma_i^2+(\max \bar y_i-\min \bar y_i)^2]$. So, if $\mu$ ranges over $[\min_i\bar{y}_i,\max_i\bar{y}_i]$, the optimal $\bm{\sigma^2}$ is also contained in a the closed bounded region $[\min_i\bar\sigma_i^2,\max \bar\sigma_i^2+(\max \bar y_i-\min \bar y_i)^2]^k$. By our assumption that all $\bar\sigma_i^2>0$, the MLE exists being a maximizer of a continuous function on a compact set in $\mathbb{R}\times \mathbb{R}^k$
\end{proof}
If we consider this case as a special case of the case considered in \cite{shi1994} the solution is unique if Condition \ref{conditionb} holds. 
Given that the solution is not in a closed form, we use an iterative procedure to approximate the solution.
As a starting value $\hat\mu^{(0)}$, a modified version of the Graybill-Deal estimator of the common mean when the variances are subject to order restrictions proposed in \cite{misra1997} appears to be a good choice:
\begin{equation}
\hat\mu_{(I)}=\frac{\sum_{i=1}^kw_i\hat\tau_i\bar y_i}{\sum_{i=1}^kw_i\hat\tau_i}
\end{equation}
where $\hat\tau_i$ is the isotonic regression of $(\bm{t},\bm{N})$ where $\bm{t}=(t_1,\dots t_k)'$, $t_i=\frac{1}{s_i^2}$.\\
Under $H_1$ we use as estimates of $(\bm{\mu_{H_1}},\bm{\sigma^2_{iH1}})$, $(\bm{\hat\mu^I},\bm{\hat\sigma^{2I}})$ found using the iterative procedure described in Section \ref{subsec:isocase3}.\\
In contrast with the previous cases, it is not possible to further reduce the expression of the LRT because
\begin{equation*}
\exp\Big\{\frac{1}{2}\sum_{i=1}^k\sum_{j=1}^{n_i}\frac{(y_{ij}-\hat\mu_{H_0})^2}{\sigma^2_{H_0}}\Big\}
\end{equation*}
does not reduce to a constant. The same holds under $H_1$.
Therefore the LRT in this case can be computed by substituting the solutions obtained via the iterative procedure under $H_0$ and $H_1$ in the generic expression given in (\ref{eq:genericlrt}):

\begin{equation}
\Lambda^I=\frac{L(\hat\mu_{H_0},\bm{\hat\sigma^{2I}_{H_0}})}{L(\bm{\hat\mu^I_{H_1}},\bm{\hat\sigma^{2I}_{H_1}})}
\label{eq:lrtordvar}
\end{equation}

\section{Null hypothesis distribution of the test statistics: bootstrap approach}
\label{sec:bootstrap}
In order to determine the significance of the various test statistics proposed in the previous sections, we need the null hypothesis distribution of the test statistics.
The main distributional results concerning $\bar\chi^2_k$ and $\bar E^2_k$, the test statistics derived in the known variance ratio case, are contained in \cite{barlow1972} (Theorems 3.1-3.2).
However, problems related to the value of $k$ can arise in the analytical derivation of the p-values.
Numerical approximation can be necessary, especially if $k>4$ and if the variation in the range of the weights is not `moderate' \cite{robertson1983,silvapulle2005}.\\
Furthermore, in the case of completely unknown variances, the null distribution depends on the unknown variances. 
When analytical derivation of the null distribution is particularly complex or not possible, bootstrap methodology is a good option. 
Therefore, we propose both a parametric and a non-parametric bootstrap approach that can be easily employed for finding approximate p-values taking into account the different assumptions on the variances.
For overcoming the complex derivation when the variances are unknown, bootstrap procedures have been proposed in the literature \cite{minhajuddin2007,bazyari2017}.\\  
In particular, in \cite{minhajuddin2007} an interesting review of the methods used to approximate the null distribution of the test statistic under $H_0$ and the restrictive normality assumption (with which we will not deal in this paper) is reported. Moreover the authors propose both a parametric and non-parametric bootstrap approach for the likelihood ratio test null distribution for one sided hypothesis testing for means in a multivariate setting \cite{minhajuddin2007}.
Also in \cite{bazyari2017} a bootstrap approach to test the homogeneity of order restricted mean vectors when the covariance matrices are unknown is used.
In line with those previous approaches, here we propose two general bootstrap procedures, parametric and non-parametric, that can be used for testing the null hypothesis taking into account the various assumptions on the variances.

\subsection*{Parametric bootstrap}
\textbf{Algorithm}:\\
 \begin{description}
 \item[(1)] Obtain the estimates  $\bm{\hat\mu^I_{iH_1}}$ and $\hat\mu_{H_0}$ using the original data and compute the observed value of the test statistic of interest $LRT^{(0)}$ ($\bar\chi^{2(0)}$, $\bar E^{2(0)}$, $\tilde \Lambda^{(0)}$ or $ \Lambda^{I(0)}$).
 \item[(2)] Generate, for $1\le i\le k$, $1\le j\le n_i$ $\bm{Y_{ij}^*}\sim N\big(\hat\mu_{H_0},\sqrt{\sigma^2_{iH_0}}\big)$, independently.
 \item[(3)] For $(\bm{Y_{i}^*},\dots,\bm{Y_{k}^*})$ obtain the estimates $\bm{\hat\mu^{I*}_{i}}$ and $\hat\mu^*$ and compute the bootstrap test statistic of interest $LRT^*$
 \item[(4)] Repeat \textbf{(2)}-\textbf{(3)} for a sufficient large number of times $M$
 \end{description}
The bootstrap approximation of the p-value is the given by:
\begin{equation}
p\approx\frac{\#(LRT^*>LRT^{(0)})}{M}
\label{eq:bootpvalue}
\end{equation}
 and the null hypothesis is rejected whenever this p-value is less than the nominal level $\alpha$.\\
Step (2) is the key step, in which the assumption on the variances play a crucial role.
It is interesting to notice that the above procedure can be further simplified.
In fact, we can instead of generating individual observations, directly generate empirical means $\bar y_i=Z_i*\frac{\sigma}{\sqrt{n_i}}$, with $Z_i$ Standard Normally distributed.

\subsection*{Non-parametric bootstrap}
 The non-parametric version of the bootstrap releases the normality assumption of the bootstrap samples. 
However a relatively large sample size is required for the following approach. 

 \textbf{Algorithm}:\\
 \begin{description}
  \item[(1)] Obtain the estimates $\bm{\hat\mu^I_{iH_1}}$ and $\hat\mu_{H_0}$ using the original data and compute the observed value of the test statistic of interest $LRT^{(0)}$ ($\bar\chi^{2(0)}$, $\bar E^{2(0)}$, $\tilde \Lambda^{(0)}$ or $ \Lambda^{I(0)}$). 
 \item[(2)] Standardize the original $y_{ij}$ to obtain `standardized residuals' $z_{ij}=\frac{y_{ij}-\bar y_i}{\sqrt{s^2_i}}$
 \item[(3)] Combine all $z_{ij}$ observations from $(1\le i\le k; 1\le j\le n_i)$  into a vector of length $\sum_{i=1}^k n_i$ and draw $k$ simple random samples $z^*_{ij}$ with replacement each of respective sizes $(n_1,n_2,\dots,n_k)$
 \item[(4)] Transform $z^*_{ij}$ to $y_{ij}^*=z_{ij}^*\cdot\tilde{\sigma}_{iH_0}+\hat\mu_{H_0}$
 \item[(5)] For each bootstrap sample $y^*_{ij}$ $(1\le i\le k; 1\le j\le n_i)$ obtain the estimates  $\bm{\hat\mu^{I*}_{iH_1}}$ and $\hat\mu_{H_0}^*$ and compute the bootstrap test statistic of interest  $LRT^*$
 \item[(6)] Repeat  \textbf{(3)}-\textbf{(4)}-\textbf{(5)} for a sufficient large number of times $M$
 \end{description}
 
The bootstrap approximated p-value is defined as in the parametric case (\ref{eq:bootpvalue}).

\section{Application}
\label{sec:application}
One of the most common ways for investigating strength and ductility of metallic materials is by performing a tensile test.
Loosely speaking a tensile test is an experiment in which force is applied to the test sample causing deformation of the material, temporarily (elastic behavior), permanently (plastic behavior) and eventually its fracture \cite{Davis2004}.
Data used in this paper are image data of
the microstructure of the material subjected to a plastic strain (deformation) of 0.139 obtained performing a uniaxial tensile test in which force is applied to the test sample with respect to just one specific axis (Fig. \ref{fig:tensiletest}).
\begin{figure}[!h]
\centering
\includegraphics[scale=0.06]{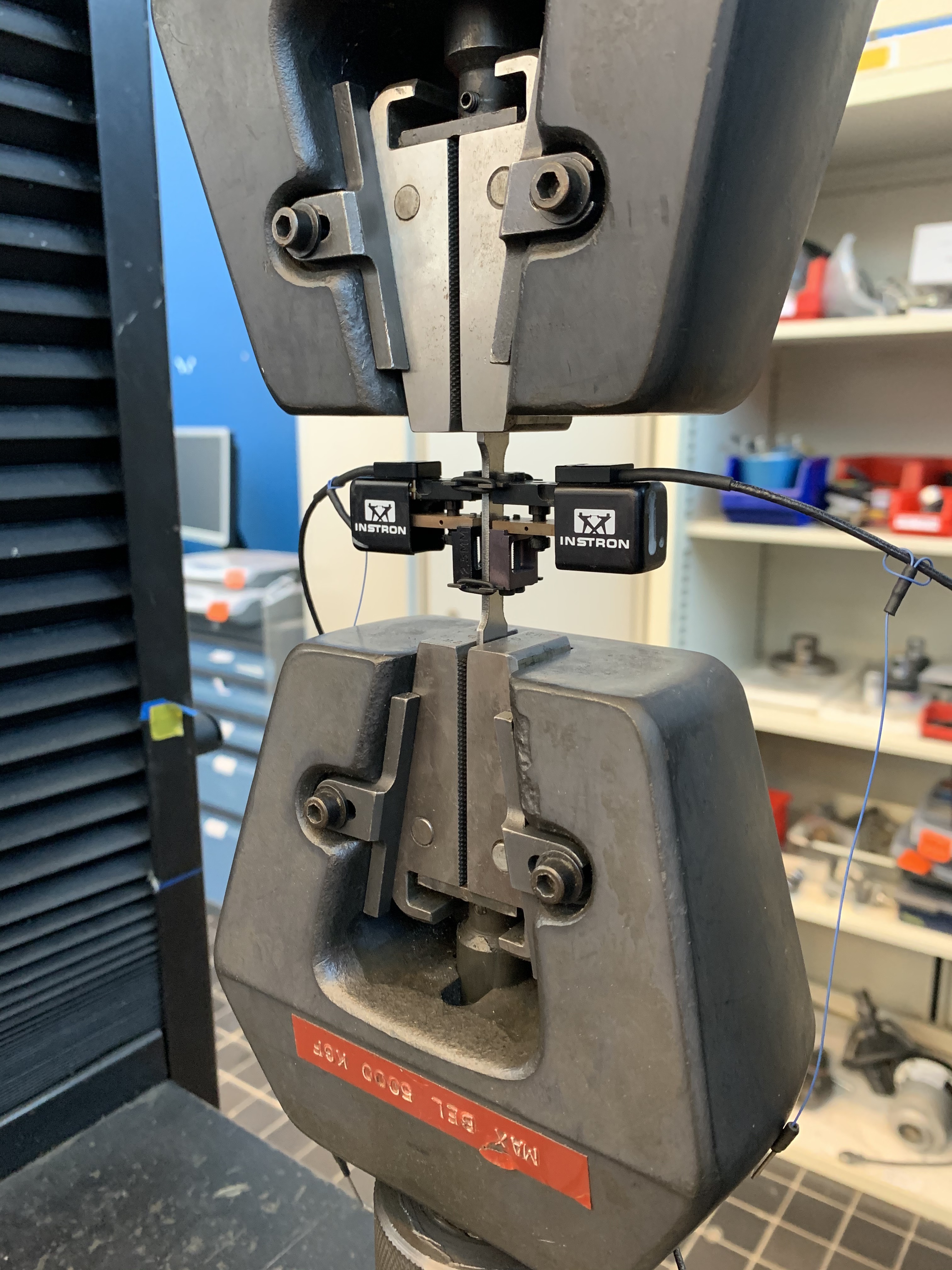}
\caption{Tensile testing machine}
\label{fig:tensiletest}
\end{figure}
At a microstructure level the deformation of the material corresponds to displacements in the lattice structure and in the possible appearance of Geometrically Necessary Dislocations (GNDs).
The material used in this paper is an annealed AISI420 stainless steel with $M_{23}C_6$ carbides and aim is investigating the carbide effect on the GNDs formation. 
Kernel Average Misorientation (KAM) is used as a proxy of the GNDs. In Figure \ref{fig:kam25} the KAM is represented by red filaments, the blue lines represent the grain boundaries, carbides are the black dots.
\begin{figure}[!h]
\centering
\includegraphics[scale=0.15]{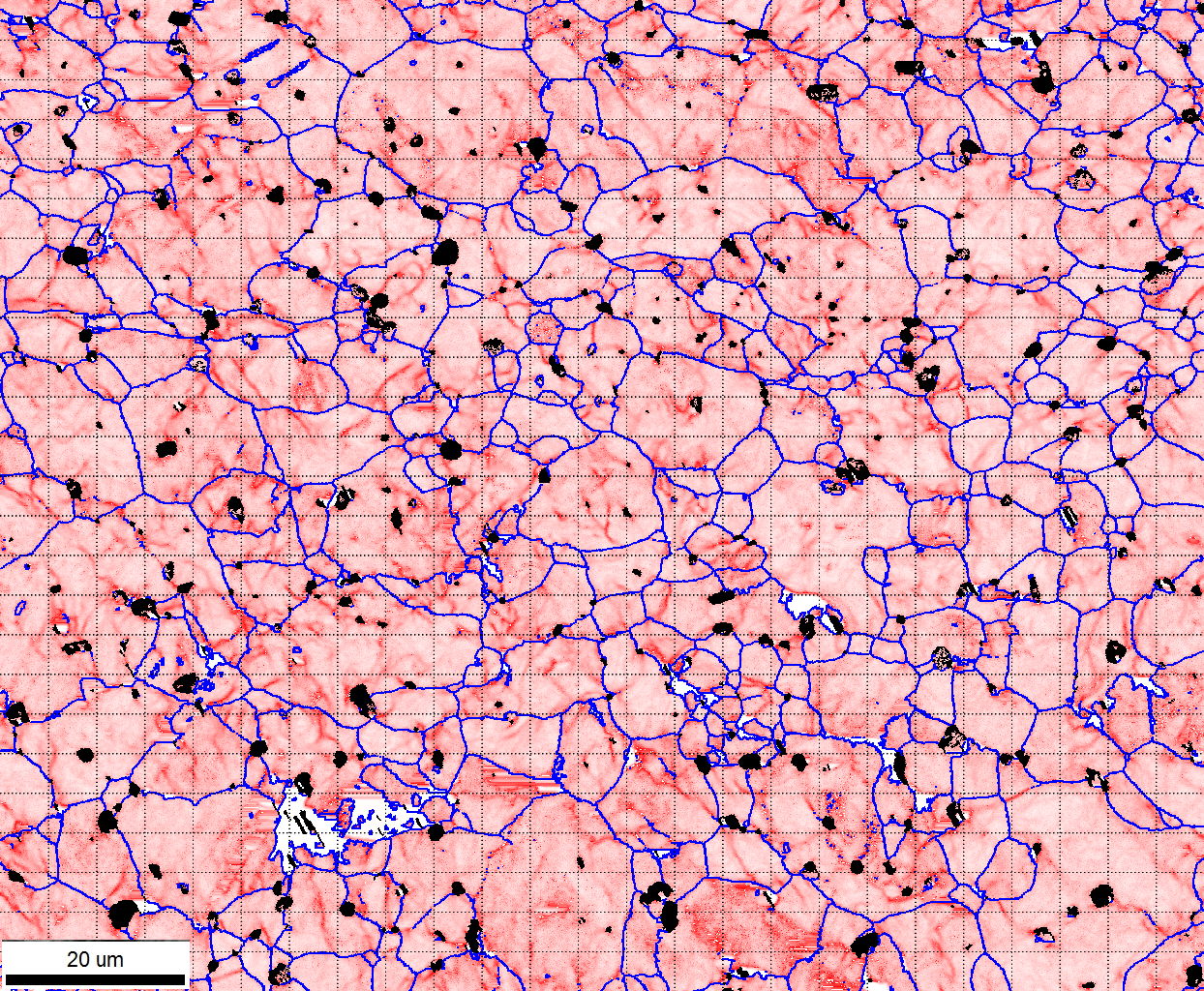}
\caption{Microstructure image showing the KAM at strain level 13.9 \% (overlapped grid of $25\times25$).}
\label{fig:kam25}
\end{figure}

\begin{figure}[!h]
\centering
\includegraphics[scale=0.3]{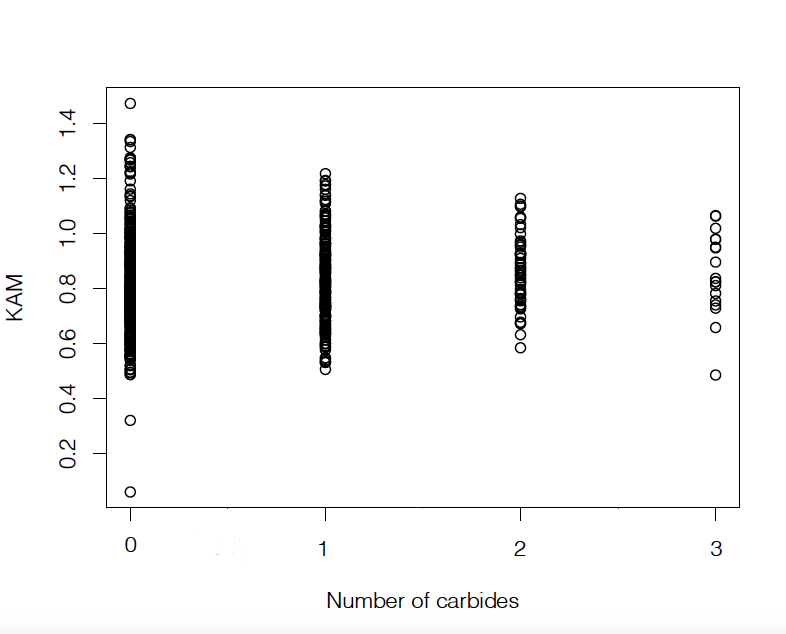}
\caption{Plot of KAM and Number of carbides for the 625 squared areas of Figure \ref{fig:kam25}}
\label{fig:plotdata}
\end{figure}
The results of the estimation in the three different scenarios faced in Sections \ref{subsec:isocase1}-\ref{subsec:isocase3} are summarized in Table \ref{tab:iso15carb}.
\begin{table}[ht]
\centering
\caption{Values of estimated means and variances of the KAM conditioned on the number of carbides visible in a square of a grid $25\times25$ according to different order restrictions assumptions (13.9\% Strain)}
\begin{tabular}{rrrrr}
  \hline
 &0& 1 & 2 & 3  \\ 
  \hline
$\bar y$ & 0.815 & 0.833 & 0.870 & 0.854 \\ 
 $\bar\sigma^2$ & 0.035 & 0.024 & 0.017 & 0.022 \\ 
 $s^2$ & 0.035 & 0.024 & 0.017 & 0.023 \\ 
\hline
  $\hat\mu_{(1)}^I$ & 0.815 & 0.833 & 0.867 & 0.867  \\ 
  \hline
 $\hat\mu_{(2)}^I$ & 0.815 & 0.833 & 0.867 & 0.867  \\ 
 $\hat\sigma_{(2)}$& 0.035 & 0.024& 0.017 & 0.022 \\ 
  \hline
 $\hat\mu_{3)}^I$ & 0.815 & 0.833 & 0.866 & 0.866  \\ 
 $\hat\sigma_{(3)}^I$& 0.035 & 0.024 & 0.018 & 0.018  \\ 
 \hline
  $n$ & 340 & 211 & 54 & 18\\ 
   \hline
\end{tabular}
\label{tab:iso15carb}
\end{table}
\begin{table}[ht]
\centering
\caption{Estimated values for the four different likelihood ratio test with the corresponding parametric and non-parametric p-values}
\begin{tabular}{rrrrr}
  \hline
   &$\hat \mu_{H_0}$&Test Statistic&  p-value (parametric) & p-value (non-parametric) \\ 
$\bar \chi^{2(0)}$&0.827 & 5.760&0.0323&0.0310\\ 
$\bar E^{2(0)}$ & 0.831& 0.0121 & 0.0112 & 0.0085 \\ 
$-2\log\tilde \Lambda$&  0.831 & $7.330$ & 0.0178 & 0.0222 \\ 
$\Lambda^{I(0)}$&  0.831 & $7.105$ & 0.0212& 0.0251  \\ 
\hline
 &0& 1 & 2 & 3  \\ 
 \hline
$\hat\sigma^2_{H_0}$&0.035&0.024&0.018&0.022\\
$\hat\sigma^{2I}_{H_0}$&0.035&0.024&0.019&0.019\\
\hline
\end{tabular}
\label{tab:test15carb}
\end{table}
Modeling the relationship between KAM and carbides and more generally understanding its inhomogeneous distribution over the microstructure is now the main aim and it can be considered a starting point for finding a stochastic model for predicting mechanical properties from 2D microstructure images.
We apply estimation procedures and perform tests under order restrictions, three different univariate isotonic regressions according to the assumption on the variances.\\
The first step for obtaining the data in the most suitable form for the analysis is `overlaying' a grid over the image.
In Figure \ref{fig:kam25} a $25\times25$ grid is added to the image.
With $y_{ij}$, we denote the mean KAM value of the $j$th square of the grid of the image taken in which $i$ carbides are observed.
The explanatory variable $\mathrm{X}$ in all three isotonic regressions is the number of carbides observed in the grid squares. A plot of the data is shown in Figure \ref{fig:plotdata}.\\
We wish to test the null hypothesis that the expected KAM is the same in all the squares of the gird, regardless the numbers of carbides observed in the grid. The alternative hypothesis 
\begin{equation}
\mu_0\le\mu_1\le\mu_2\le\mu_3.
\end{equation}
represents the idea that KAM tends to be higher in areas where more carbides are observed.
Moreover, in the ordered variances case, we assume that 
\begin{equation}
\sigma^2_0\ge\sigma^2_1\ge\sigma^2_2\ge\sigma^2_3>0.
\end{equation}
This is in accordance with what we see in Figure \ref{fig:plotdata}. In fact, the idea behind this assumption is that in areas in which less carbides are observed GNDs have more freedom to move, resulting in increments in dispersion.
In Table \ref{tab:test15carb}, the results of testing the null hypothesis are shown.
For computing $\chi^{2(0)}$ and $\bar E^{2(0)}$, the variance ratio is supposed to be known. In the specific case, we assume that the KAM total variance for the whole image is the real known variance and that $c_i=\frac{\bar\sigma^2_i}{\sigma^2}$.
For computing both the parametric and non-parametric p-values, the two different bootstrap approaches described in Section \ref{sec:bootstrap} have been used and $M$, the number of replications, is taken equal to $20000$.
Independent of the knowledge or assumptions on the variances, the conclusion is the same and leads to the rejection of the null hypothesis.

\section{Conclusions}
\label{sec:conclusions}
This paper presents three different models involving order restrictions and within these models the ML estimators and Likelihood Ratio tests for the homogeneity of the means against monotonicity are introduced and studied.
Prior knowledge given by physical relations or intuition is not often exploited in statistical studies about materials and this can lead to less efficient methods that produce less accurate results. 
After having described the estimation procedures and highlighted how prior knowledge of the variances influence these, we propose the likelihood ratio test as test statistic for testing homogeneity of means.
In the unknown variances case and the ordered variances case, heteroskedasticity plays a crucial role also under the null hypothesis,
leading to different estimates of the common mean under $H_0$.
Results on existence and uniqueness of the maximum likelihood estimates in these last two cases are derived.
Furthermore, two different bootstrap approaches are proposed for approximating the null distribution of the test statistic under the different assumptions on the variances.
The proposed tests are applied to a real data example from Materials Science, showing evidence that the so-called KAM tends to be higher in regions of the microstructure where more carbides are observed.
In fact, incorporating reasonable intuition about the order of means and the variances order in this context helps understanding the evolution of complicated structure of dislocations in metals and its effect on the hardening behaviour of the material during deformation.

\section*{Acknowledgements}
This research was carried out under project number S41.5.14547b in the framework of the Partnership Program of the Materials innovation institute M2i (www.m2i.nl) and the Technology Foundation TTW (www.stw.nl), which is part of the Netherlands Organization for Scientific Research (www.nwo.nl).

\appendix
\section{}
\label{sec:appendix}

\subsection{ALGORITHM 2.1}
\label{subsec:ALGORITHM 2.1}
\begin{description}
\item[0. INITIALIZATION] Let $\bm{\mu}^{(0)}=\bm{\bar y}=(\bar y_1,\dots\bar y_k)'$ and $\bm{w}=(w_1,\dots w_k)'$, $w_i=\frac{n_i}{c_i}$ 
\item[1. QUESTION] Is $\bar y_1\le \bar y_2\le\dots\le \bar y_k$?
\begin{description}
\item[1.1 YES] $\bm{\mu}^*=\bm{\bar y}$ is the solution
\item[1.2 NO] $\bar y_i>\bar y_{i+1}$ \\
Replace $\bar y_i$ and $\bar y_{i+1}$ by
\begin{equation}
m_{i,i+1}=\frac{w_i\bar y_i+w_{i+1}\bar y_{i+1}}{w_i+w_{i+1}}
\end{equation}
Repeat until QUESTION 1 is satisfied.

\end{description}
\end{description}

\subsection{ALGORITHM 2.2}
\label{subsec:ALGORITHM 2.2}
\paragraph*{Two steps Iterative procedure}
\begin{description}
\item[0. INITIALIZATION] Let $\bm{\mu^{(0)}}=\bm{\bar y}=(\bar y_1,\dots\bar y_k)'$, $\bm{\sigma^{2(0)}}=\bm{\bar\sigma^2}=(\bar\sigma^2_1,\dots \bar\sigma^2_k)'$, $\bar\sigma^2_i=\frac{\sum_j(y_{ij}-\bar y_i)^2}{n_i}$ and $\bm{w^{(0)}}=(w^{(0)}_1,\dots w^{(0)}_k)'$, $w_i^{(0)}=\frac{n_i}{\sigma^{2(0)}}$
\item[1. QUESTION] Is $\bar y_1\le \bar y_2\le\dots\le \bar y_k$?
\begin{description}
\item[1.1 YES] $\bm{\mu^*}=\bm{\mu^{(0)}}=\bm{\bar y}$ and $\bm{\sigma^{2}}=\bm{\sigma^{2(0)}}=\bm{\bar\sigma^2} $ are the solutions
\item[1.2 NO]  Use Step 1.2 Algorithm 2.1 to compute $\bm{\mu^{(l)}}$ with weights $\bm{w^{(l-1)}}$
\item[1.2.1] Compute $\bm{\sigma^{2(l)}}=\bm{s^{2(l)}}$, $s_i^{2(l)}=\frac{\sum_j(y_{ij}-\mu_i^{(l)})^2}{n_i}$ and $\bm{w^{(l)}}=(w^{(l)}_1,\dots w^{(l)}_k)'$, $w_i^{(l)}=\frac{n_i}{\sigma^{2(l)}}$
\item[1.2.2] Go back to QUESTION 1 using $\bm{w^{(l)}}$.\\
Repeat until \begin{equation*}
\max_{1\le i\le k}|\mu_i^{*(l-1)}-\mu_i^{*(l)}|\le 10^{-m}
\end{equation*}
\end{description}
\end{description}

\paragraph*{Alternating Iterative Method}
\begin{description}
\item[0. INITIALIZATION] Let $\bm{\nu^{(0)}}=(1/\bar\sigma^2_1,\dots 1/\bar\sigma^2_k)'$, $\bar\sigma^2_i=\frac{\sum_j(y_{ij}-\bar y_i)^2}{n_i}$
\item[1. FIND] $\bm{\mu^(l)}$ the isotonic regression on $D_a$ using weights $\bm{w^{(l-1)}}=(w^{(l-1)}_1,\dots w^{(l-1)}_k)'$, $w_i^{(l-1)}=n_i\nu^{(l-1)}$;
\item[2. FIND] $\bm{\nu^(l)}$ maximizing the profile likelihood $L(y;\bm{\mu^{(l)}};\bm{\nu})$ on $V_0$, $V_0=\{\nu\in\mathbb{R}^k:0\le 1/\max_i(\min_{\min(\bm{\bar y})\le\theta\le\max(\bm{\bar y})}s^2_i(\theta))\le \nu_1\le\dots\le\nu_k\le1/ \min_i(\min_{\min(\bm{\bar y})\le\theta\le\max(\bm{\bar y})}s^2_i(\theta))\}$.\\
Repeat (1)-(2) until
 \begin{equation*}
|L(y;\bm{\mu^{(l-1)}},\bm{\nu^{(l-1)}})-L(y;\bm{\mu^{(l)}},\bm{\nu^{(l)}})|\le 10^{-m}
\end{equation*}
\end{description}
\subsection{ALGORITHM 2.3}
\label{subsec:ALGORITHM 2.3}
\paragraph*{Two steps Iterative procedure}
\begin{description}
\item[0. INITIALIZATION] Let $\bm{\mu^{(0)}}=\bm{\bar y}=(\bar y_1,\dots\bar y_k)'$, $\bm{\sigma^{2(0)}}=\bm{\bar\sigma^2}=(\bar\sigma^2_1,\dots \bar\sigma^2_k)'$, $\bar\sigma^2_i=\frac{\sum_j(y_{ij}-\bar y_i)^2}{n_i}$ and  and $\bm{w^{(0)}}=(w^{(0)}_1,\dots w^{(0)}_k)'$, $w_i^{(0)}=\frac{n_i}{\sigma^{2(0)}}$.
\item[1. QUESTION] Is $\bar y_1\le \bar y_2\le\dots\le \bar y_k$?
\begin{description}
\item[1.1 YES] $\bm{\mu^*}=\bm{\mu^{(0)}}=\bm{\bar y}$ go to QUESTION 2.
\item[1.2 NO] Use Step 1.2 Algorithm 2.1 to compute $\bm{\mu^{(l)}}$ with weights $\bm{w^{(l-1)}}$ 
\item[1.2.1] Compute $\bm{\sigma^{2(l)}}=\bm{s^{2(l)}}$, $s_i^{2(l)}=\frac{\sum_j(y_{ij}-\mu_i^{(l)})^2}{n_i}$ and $\bm{w^{(l)}}=(w^{(l)}_1,\dots w^{(l)}_k)'$, $w_i^{(l)}=\frac{n_i}{\sigma^{2(l)}}$
\item[2. QUESTION]  Is $\sigma^{2(l)}_1\ge \sigma^{2(l)}_2\ge\dots\ge\sigma^{2(l)}_k$?
\begin{description}
\item[2.1 YES] $\bm{\mu^*}=\bm{\mu^{(l)}}$ and $\bm{\sigma^{2*}}=\bm{\sigma^{2(l)}}$ are the solutions.

\item[2.2. NO] $\sigma^{2(l)}_i<\sigma^{2(l)}_j$ \\
Replace $\sigma^{2(l)}_i$ and $\sigma^{2(l)}_{i+1}$ by
\begin{equation}
\sigma^{2(l+1)}_{i}=\sigma^{2(l+1)}_{i+1}=\bar s_{i,i+1}=\frac{n_i\sigma^{2(l)}_i+n_{i+1}\sigma^{2(l)}_{i+1}}{n_i+n_{i+1}}
\end{equation}
Repeat until QUESTION 2 is satisfied.
\item[2.2.1] Go back to QUESTION 1 using $\bm{w^{(l)}}$.\\
Repeat until \begin{equation*}
\max_{1\le i\le k}|\mu_i^{*(l-1)}-\mu_i^{*(l)}|\le 10^{-m} \,\,\,\, \mathrm{and} \,\,\,\, \max_{1\le i\le k}|\sigma_i^{2*(l-1)}-\sigma_i^{2*(l)}|\le 10^{-m} 
\end{equation*}
\end{description}
\end{description}
\end{description}
\paragraph*{Alternating Iterative Method}
\begin{description}
\item[0. INITIALIZATION] Let $\bm{\nu^{(0)}}=(1/\bar\sigma^2_1,\dots 1/\bar\sigma^2_k)'$, $\bar\sigma^2_i=\frac{\sum_j(y_{ij}-\bar y_i)^2}{n_i}$
\item[1. FIND] $\bm{\mu^(l)}$ use Step 1 AlM Algorithm 2.2;
\item[2. FIND] $\bm{\nu^(l)}$ the isotonic regression on $V_0$, $V_0=\{\nu\in\mathbb{R}^k:0\le 1/\max_i(\min_{\min(\bm{\bar y})\le\theta\le\max(\bm{\bar y})}s^2_i(\theta))\le \nu_1\le\dots\le\nu_k\le1/ \min_i(\min_{\min(\bm{\bar y})\le\theta\le\max(\bm{\bar y})}s^2_i(\theta))\}$ with weights $\bm{N}=(n_1,\dots,n_k)'$\\
Repeat (1)-(2) until
 \begin{equation*}
|L(y;\bm{\mu^{(l-1)}},\bm{\nu^{(l-1)}})-L(y;\bm{\mu^{(l)}},\bm{\nu^{(l)}})|\le 10^{-m}
\end{equation*}
\end{description}


\begin{thebibliography}{99}

\bibitem{ayer1955}
M. Ayer, H. D. Brunk, G. M. Ewing, W. T. Reid, and E. Silverman, \emph{An empirical distribution function for sampling with incomplete information}, The annals of mathematical statistics, 1955, pp. 641-647.

\bibitem{barlow1972}
 R. E. Barlow and H. D. Brunk, \emph{The isotonic regression problem and its dual}, Journal of the American Statistical Association, 1972, Vol.~67(337), pp.140-147.

\bibitem{Barragan2013}
S. Barrag\'an, M. A. Fern\'andez, C. Rueda, and S.D. Peddada, \emph{isocir: An R package for constrained inference using isotonic regression for circular data, with an application to cell biology}, Journal of Statistical Software, 2013, Vol.~54(4).

\bibitem{bazyari2017}
 A. Bazyari, \emph{Bootstrap approach to test the homogeneity of order restricted mean vectors when the covariance matrices are unknown}, Communications in Statistics-Simulation and Computation, 2017, Vol.~46(9), pp. 7194-7209.

\bibitem{bird2015}
M. W., Bird, T. Rampton, D. Fullwood, P. F. Becher, \& K. W. White, \emph{Local dislocation creep accommodation of a zirconium diboride silicon carbide composite}, Acta Materialia, 2015, Vol.~84, pp. 359-367.

\bibitem{calcagnotto2010}
M. Calcagnotto, D. Ponge, E. Demir, \& D. Raabe, \emph{Orientation gradients and geometrically necessary dislocations in ultrafine grained dual-phase steels studied by 2D and 3D EBSD}, Materials Science and Engineering: A, 2010, Vo.~527(10-11), pp. 2738-2746.

\bibitem{Davis2004}
J.R. Davis (Ed.). (2004). \emph{Tensile testing}, ASM international.

\bibitem{gokpinar2012}
E. Y. G\"okpinar, \& F. G\"okpinar, \emph{A test based on the computational approach for equality of means under the unequal variance assumption}, Hacettepe journal of Mathematics and Statistics, 2012, Vol.~41(4), pp. 605-613.

\bibitem{graybill1959}
F. A. Graybill and R. B. Deal, \emph{Combining unbiased estimators}, Biometrics, 1959, Vol.~15(4), pp. 543-550.


\bibitem{Groeneboom2014}
P. Groeneboom and G. Jongbloed, \emph{Nonparametric estimation under shape constraints}, Cambridge University Press, 2014, Vol.~38.

\bibitem{Guntuboyina2018}
A. Guntuboyina and B. Sen, \emph{Nonparametric shape-restricted regression}, Statistical Science, 2018, Vol.~33(4), pp. 568-94.

\bibitem{hall1951}
E. O. Hall, \emph{The deformation and ageing of mild steel: III discussion of results}, Proceedings of the Physical Society. Section B, 1951, , Vol.~64(9), pp. 747.

\bibitem{hildago2019}
J. Hidalgo, M. Vittorietti, H. Farahani, F. Vercruysse, R. Petrova and J. Sietsma, \emph{Influence of large M23C6 carbides on the heterogeneous strain development in annealed 420 stainless steel},  preprint (2019). 
\bibitem{hull2001}
D. Hull and D. J. Bacon, \emph{Introduction to dislocations}, Butterworth-Heinemann, 2001.

\bibitem{kadkhodapour2011}
J. Kadkhodapour, S. Schmauder, D. Raabe, S. Ziaei-Rad, U. Weber, \& M. Calcagnotto, \emph{Experimental and numerical study on geometrically necessary dislocations and non-homogeneous mechanical properties of the ferrite phase in dual phase steels}, Acta Materialia, 2011, Vo.~59(11), pp. 4387-4394.

\bibitem{Keshvari2013}
A. Keshvari and  T. Kuosmanen, \emph{Stochastic non-convex envelopment of data: Applying isotonic regression to frontier estimation}, European Journal of Operational Research, 2013, Vol.~231(2), pp.481-491. 

\bibitem{li2017}
W. Li, H. Fu \emph{Bayesian isotonic regression doseÐresponse model}, Journal of biopharmaceutical statistics, 2017, Vol.~27(5), pp. 824-33.

\bibitem{littlewood2011}
P. D. Littlewood, T. B. Britton and A. J. Wilkinson, \emph{Geometrically necessary dislocation density distributions in Ti-6Al-4V deformed in tension}, Acta Materialia, 2011, Vol.~59, pp. 6489-6500.

\bibitem{Luss2012}
R. Luss, S. Rosset and M. Shahar, \emph{Efficient regularized isotonic regression with application to gene-gene interaction search}, The Annals of Applied Statistics, 2012, Vol.~6(1), pp. 253-283.

\bibitem{minhajuddin2007}
A. T. Minhajuddin, W. H. Frawley, W. R. Schucany, and W. A. Woodward, \emph{Bootstrap tests for multivariate directional alternatives}, Journal of Statistical Planning and Inference, 2007, Vol.~137(7), pp. 2302-2315.

\bibitem{misra1997}
N. Misra and E. C. van der Meulen, \emph{On estimation of the common mean of $k(\ge2)$ normal populations with order restricted variances}, Statistics \& Probability Letters, 1997, Vol.~36(3), pp. 261-267.

\bibitem{moussa2017}
C. Moussa, M. Bernacki, R. Besnard, \& N. Bozzolo, \emph{Statistical analysis of dislocations and dislocation boundaries from EBSD data}., Ultramicroscopy, 2017, Vol.~179, pp.63-72.

\bibitem{mutlu2017}
H. T. Mutlu, F. G\"okpinar, E. G\"okpinar, H. H. G\"ul, \& G. G\"uven, \emph{A new computational approach test for one-way ANOVA under heteroscedasticity}, Communications in Statistics-Theory and Methods, 2017, Vol.~ 46(16), pp. 8236-8256.
\bibitem{perlman1969}
M. D. Perlman, \emph{One-sided testing problems in multivariate analysis}, The Annals of Mathematical Statistics, 1969, Vol.~40(2), pp. 549-567.

\bibitem{petch1953}
N. J. Petch, \emph{The cleavage strength of polycrystals}, Journal of the Iron and Steel Institute, 1953, Vol.~174, pp. 25-28.

\bibitem{pippel1999}
E. Pippel, J. Woltersdorf, G. P\"ockl, \& G. Lichtenegger, \emph{Microstructure and nanochemistry of carbide precipitates in high-speed steel S 6-5-2-5}, Materials Characterization, 1999, Vol.~43(1), pp.41-55.

\bibitem{revilla2014}
C. Revilla, B. L\'opez, and J. M. Rodriguez-Ibabe, \emph{Carbide size refinement by controlling the heating rate during induction tempering in a low alloy steel}, Materials \& Design (1980-2015), 2014, Vol.~62, pp. 296-304..

\bibitem{robertson1983}
T. Robertson and F. T. Wright, \emph{On approximation of the level probabilities and associated distributions in order restricted inference}, Biometrika, 1983, Vol.~70(3), pp. 597-606.

\bibitem{robertson1988}
 T. Robertson, F. Wright and R. L. Dykstra,\emph{Order Restricted Statistical Inference}, New York, Wiley, 1988.

\bibitem{salanti2001}
 G. Salanti and K. Ulm, \emph{Multidimensional isotonic regression and estimation of the threshold value}, Ph.D. diss., 2001.
 
\bibitem{sasabuchi2007}
S. Sasabuchi, \emph{More powerful tests for homogeneity of multivariate normal mean vectors under an order restriction}, Sankhy?: The Indian Journal of Statistics, 2007, pp. 700-716.
\bibitem{shi1994}
N. Z. Shi, \emph{Maximum likelihood estimation of means and variances from normal populations under simultaneous order restrictions}, Journal of Multivariate Analysis, 1994, Vol.~50(2), pp. 282-293.


\bibitem{shi1998}
N. Z. Shi and H. Jiang, \emph{Maximum likelihood estimation of isotonic normal means with unknown variances}, Journal of multivariate analysis, 1998, Vol.~64(2), pp.183-195.

\bibitem{shi2008}
N. Z. Shi, G. R. Hu and Q. Cui,  \emph{An alternating iterative method and its application in statistical inference}, Acta Mathematica Sinica, 2008, English Series, Vol.~24(5), pp.843-856.
\bibitem{silvapulle2005}
M. J. Silvapulle and P. K. Sen, \emph{Constrained statistical inference: Inequality, order and shape restrictions}. John Wiley \& Sons, 2005.

\bibitem{siskind1976}
V. Siskind, \emph{Approximate probability integrals and critical values for Bartholomew's test for ordered means}, Biometrika, 1976, Vol.~63(3), pp. 647-654.

\bibitem{Stylianou2002}
M. Stylianou, and  N. Flournoy,  \emph{Dose finding using the biased coin up and down design and isotonic regression}, Biometrics, 2002, Vol.~58(1), pp.171-177.

\bibitem{vaneden1957}
C. Van Eeden, \emph{Note on two methods for estimating ordered parameters of probability distributions}, In Indagationes Mathematicae (Proceedings), North-Holland, 1957, Vol.~60, pp. 506-512.  



\end{thebibliography}
\end{document}